\newtheorem{definition}{Definition}
\newtheorem{remark}{Remark}
\newtheorem{theorem}{Theorem}
\newtheorem{proposition}{Proposition}
\newcommand{\vp}{\varphi}
\newcommand{\C}{\mathbb{C}}
\newcommand{\N}{\mathbb{N}}
\newcommand{\R}{\mathbb{R}}
\newcommand{\Z}{\mathbb{Z}}
\newcommand{\be}{\begin{equation}}
\newcommand{\eeq}{\end{equation}}
\newcommand{\bet}{\begin{equation*}}
\newcommand{\eeqt}{\end{equation*}}
\newcommand{\bea}{\begin{eqnarray}}
\newcommand{\eeqa}{\end{eqnarray}}
\newcommand{\beat}{\begin{eqnarray*}}
\newcommand{\eeqat}{\end{eqnarray*}}
\newcommand{\h}[1]{\mathcal{#1}}
\newcommand{\hil}{\mathcal{H}}
\newcommand{\br}{\mathcal{B}(\R)}
\newcommand{\brr}{\mathcal{B}(\R^2)}
\newcommand{\lh}{\mathcal{L(H)}}
\newcommand{\sfq}{\mathsf{Q}}
\newcommand{\sfp}{\mathsf{P}}
\newcommand{\p}{\mathsf{p}}
\newcommand{\E}{\mathsf{E}}
\newcommand{\F}{\mathsf{F}}
\newcommand{\G}{\mathsf{G}}
\newcommand{\tr}[1]{\mathrm{tr}\left[ {#1} \right]}
\begin{document}
\title{A note on the Pauli problem in light of approximate joint measurements}

\author{Jussi Schultz}
\address{Turku Centre for Quantum Physics, Department of Physics and Astronomy, University of Turku, FI-20014 Turku, Finland}
\email{jussi.schultz@utu.fi}

\begin{abstract}
We show that there exist informationally incomplete phase space observables such that the Cartesian margins are informationally equivalent with position and momentum. This shows that it is possible to reconstruct the position and momentum distributions of a quantum system from the statistics of a single observable, and thus a single measurement, even though the state of the system is not uniquely determined by the statistics.
\end{abstract}

\maketitle

\section{Introduction}
The Pauli problem is the classic example concerning the possibility of performing quantum state tomography. In a footnote in \cite{Pauli}, Pauli noted that the problem whether or not the state of a quantum system is uniquely determined by the position and momentum distributions ``has still not been investigated in all its generality''. In modern terminology, this is a question about the informational completeness \cite{Prugovecki1977} of  the pair $(\sfq,\sfp)$ of position and momentum observables. It is now known that the answer to this question is in the negative: $(\sfq, \sfp)$ is informationally incomplete. As a matter of fact, a wide variety  of counterexamples have been constructed showing that different states can have the same position and momentum distributions \cite{Prugovecki1977, Reichenbach, Corbett1978, Pavicic1986}. This has been viewed  as ``an illustration of the {\em surplus information}  \cite{Weizsacker1987} coded in a quantum (pure) state when compared with its classical counterpart'' \cite{Busch1989}.


The purpose of this paper is to address the Pauli problem in a slightly stronger form. We pose the following question:
\begin{itemize}
\item[(Q)]  If the position and momentum distributions of a quantum system can be reconstructed from the statistics of a single observable, is the observable necessarily informationally complete?
\end{itemize}
\noindent
We show that the answer to (Q) is also in the negative. We construct explicitly a phase space observable whose Cartesian margins are informationally equivalent \cite{Davies1970} with position and momentum, thus allowing the reconstruction of the corresponding distributions from the marginal statistics, even though the observable is informationally incomplete. This result further illustrates the difference in the role of position and momentum in determining the state of a system in quantum mechanics,  as opposed to that in  classical mechanics. Indeed, the state of a classical system is given by its generalized position and momentum coordinates which can, in principle, be determined by a single measurement, whereas in quantum mechanics even the simultaneous determination (in the  sense of (Q)) of position and momentum does not guarantee unique state determination. 


The paper is organized as follows. In Section \ref{Preliminaries} we lay out the general framework for this study. We review the relevant definitions and results concerning informational equivalence, informational completeness and phase space observables. In Section \ref{QP-tomography} we give the counterexample to (Q). We also prove the existence of informationally complete phase space observables whose margins do not suffice to determine position and momentum. In Section \ref{Reconstruction} we consider the process of reconstructing position and momentum distributions from the statistics of a single phase space observable. We do this using two different methods: the Fourier theory and the method of moments.  The conclusions are given in Section \ref{conclusions}.


\section{Preliminaries}\label{Preliminaries}
Let $\hil= L^2 (\R)$ be the Hilbert space associated with a quantum system such as a spinless particle confined to move in one dimension or a single mode electromagnetic field. Let $\{ h_n \vert n=0,1,2,\ldots \}$ be the orthonormal basis of $\hil$ consisting of the Hermite functions and denote $ \h D=\textrm{lin}\{ h_n \vert n=0,1,2,\ldots \} $ so that $\overline{\h D}=\hil$. 


The states of the system are represented by positive trace class operators $\rho$ with unit trace. The observables are represented by normalized positive operator measures $\E: \h B(\R^n) \rightarrow \lh$ where $\h B(\R^n)$ denotes the $\sigma$-algebra of Borel subsets of $\R^n$, and $\lh$ is the set of bounded operators on $\hil$. An observable is called sharp if it is projection valued, that is, $\E(X)^2 =\E(X)$ for all $X\in \h B(\R^n)$. For a system in a state $\rho$ the measurement outcome statistics of an observable $\E$ is given by the probability measure $\p^\E_\rho:\h B(\R^n)\rightarrow [0,1]$, $\p^\E_\rho (X) =\textrm{tr}[\rho \E(X)]$. 
\begin{definition}
Let $\E:\h B (\R^n)\rightarrow \lh$ and $\F:\h B (\R^m)\rightarrow \lh$ be observables. 
\begin{itemize} 
\item[(a)] If for any two states $\rho$ and $\sigma$, $\p^\E_\rho =\p^\E_\sigma$ implies $\p^\F_\rho =\p^\F_\sigma$, then the state distinction power of $\E$ is greater than or equal to that of $\F$. 
\item[(b)] If for any two states $\rho$ and $\sigma$, $\p^\E_\rho =\p^\E_\sigma$ if and only if $\p^\F_\rho =\p^\F_\sigma$, then $\E$ and $\F$ are informationally equivalent. 
\item[(c)] If for any two states $\rho$ and $\sigma$, $\p^\E_\rho =\p^\E_\sigma$ implies $\rho =\sigma$, then $\E$ is informationally complete.
\end{itemize}
\end{definition}


Let $Q$ and $P$ denote the selfadjoint position and momentum operators and let $\sfq,\sfp:\br\rightarrow \lh$ be the corresponding sharp observables. Define the Weyl operators $W(q,p)= e^{i\frac{qp}{2}} e^{-iqP} e^{ipQ}$, $(q,p)\in\R^2$, and for each positive trace class operator $T$ with unit trace define the phase space observable $\G^T:\brr\rightarrow \lh$ by 
\begin{equation*}
\G^T (Z) =\frac{1}{2\pi} \int_Z W(q,p) TW(q,p)^*\, \textrm{d}q\, \textrm{d}p
\end{equation*}
for all $Z\in\brr$. The operator $T$ is called the generating operator of $\G^T$. The Cartesian margins of $\G^T$ are the  unsharp position and momentum observables $\mu^T*\sfq, \nu^T*\sfp:\br\rightarrow \lh $ defined as 
\begin{align}
\left( \mu^T*\sfq \right) (X) &=\int_X \mu^T(X-x)\, \textrm{d}\sfq(x),\nonumber\\
\left( \nu^T*\sfp \right) (Y) &=\int_Y \nu^T (Y-y)\, \textrm{d}\sfp(y),\nonumber
\end{align}
for all $X,Y\in\br$. The convolving measures are determined by the generating operator $T$ so that $\mu^T (X) =\textrm{tr}[T\sfq (-X)]$ and $\nu^T (Y) =\textrm{tr}[T\sfp (-Y)]$.


In general the state distinction power of, say, the unsharp position $\mu^T *\sfq$ does not exceed that of $\sfq$. However, they may be   informationally equivalent. This is the case if and only if the support of the Fourier transform of the convolving measure is $\R$ \cite{Heinonen2004}, that is, $\textrm{supp}\, \widehat{\mu}^T =\R$ where
\begin{equation}\label{FourierQ}
\widehat{\mu}^T (p) = \frac{1}{\sqrt{2\pi}}\int e^{-ipx}\, \textrm{d}\mu^T(x) =\frac{1}{\sqrt{2\pi}}\textrm{tr}[T W(0,p)]
\end{equation}
for all $p\in\R$. The same is of course true for the unsharp momentum $\nu^T*\sfp$, where the Fourier transform is now given by 
\begin{equation}\label{FourierP}
\widehat{\nu}^T (q) = \frac{1}{\sqrt{2\pi}}\textrm{tr}[T W(-q,0)].
\end{equation}


The informational completeness of the phase space observable $\G^T$ can also be characterized in terms of the  Weyl transform $(q,p)\mapsto \tr{TW(q,p)}$ of the generating operator. Indeed, a sufficient condition for informational completeness has been known since \cite{Ali1977}. In the recent paper \cite{Kiukas2012}, this  question was exhaustively resolved and  several equivalent necessary and sufficient conditions were obtained. For the purpose of the present paper the following characterization is convenient (see \cite[Prop. 4]{Kiukas2012}).
\begin{theorem}\label{infotheorem}
The phase space observable  $\G^T$ is informationally complete if and only if the support of $(q,p)\mapsto \tr{TW(q,p)}$ is $\R^2$. 
\end{theorem}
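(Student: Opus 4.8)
\emph{Proof proposal.} The plan is to reduce informational completeness to a statement about the common kernel of the effects $\G^T(Z)$, and then to translate that statement, via the Weyl transform, into the stated support condition on $(q,p)\mapsto\tr{TW(q,p)}$. First I would invoke the standard reduction: since differences $\rho-\sigma$ of states span the self-adjoint trace-zero trace-class operators, and a general trace-class operator splits into self-adjoint parts, $\G^T$ is informationally complete if and only if the only trace-class operator $D$ with $\tr{D\,\G^T(Z)}=0$ for all $Z\in\brr$ is $D=0$. Using the definition of $\G^T$, this means that the Lebesgue density
\[
f_D(q,p)=\tr{D\,W(q,p)TW(q,p)^*}
\]
must vanish for a.e.\ $(q,p)$ only when $D=0$.

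The heart of the argument is to express $f_D$ through Weyl transforms. Writing $\tau(a,b)=\tr{TW(a,b)}$ and $\delta_D(a,b)=\tr{DW(a,b)}$, I would use two inputs: (i) the Weyl composition relation, according to which conjugation by $W(q,p)$ multiplies a Weyl operator by a character, so that the Weyl transform of $W(q,p)TW(q,p)^*$ is $(a,b)\mapsto e^{i(qb-pa)}\,\tr{TW(a,b)}$; and (ii) the standard square-integrability (Plancherel) property of the Weyl transform, which makes $A\mapsto\tr{AW(\cdot)}$ a unitary map from the Hilbert-Schmidt class onto $L^2(\R^2)$, together with its inversion formula. Substituting the inversion of $T$ into $f_D$ and applying $\tr{D\,\cdot}$, one finds that $f_D$ is, up to a constant, the symplectic Fourier transform of the product of $\delta_D$ with $\tr{TW(\cdot)^*}=\overline{\tau}$. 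Since $T$ and $D$ are trace class, both $\tau$ and $\delta_D$ lie in $L^2(\R^2)$, hence their product lies in $L^1(\R^2)$, and by injectivity of the Fourier transform $f_D\equiv0$ a.e.\ if and only if $\tau\,\delta_D=0$ a.e.

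The remaining step exploits continuity. Both $\tau$ and $\delta_D$ are continuous, because $(q,p)\mapsto W(q,p)$ is strongly continuous and $T,D$ are trace class; thus $\{\tau\neq0\}$ is open and $\tau\,\delta_D=0$ forces $\delta_D=0$ on this open set. If $\textrm{supp}\,\tau=\R^2$, then $\{\tau\neq0\}$ is dense, so the continuous function $\delta_D$ vanishes identically, and injectivity of the Weyl transform yields $D=0$: $\G^T$ is informationally complete. For the converse, if $\textrm{supp}\,\tau\neq\R^2$ there is a nonempty open set $U$ on which $\tau$ vanishes; using the Hermitian symmetry $\tau(-a,-b)=\overline{\tau(a,b)}$ (valid since $T=T^*$), the set $U\cup(-U)$ is again a zero set of $\tau$. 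By Plancherel surjectivity I can pick a nonzero $g\in L^2(\R^2)$ supported in $U\cup(-U)$ and satisfying $g(a,b)=\overline{g(-a,-b)}$, and take the corresponding Hilbert-Schmidt operator $D$; the symmetry of $g$ makes $D$ self-adjoint, while $\tau(0,0)=\tr{T}=1\neq0$ forces $0\notin U\cup(-U)$, so the trace $\delta_D(0,0)$ vanishes automatically. Embedding $D$ as a difference of two genuine states (e.g.\ perturbing a faithful state) exhibits $\rho\neq\sigma$ with $\p^{\G^T}_\rho=\p^{\G^T}_\sigma$, so $\G^T$ is not informationally complete.

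I expect the main obstacle to be the second step: justifying rigorously the identification of $f_D$ with the Fourier transform of $\overline{\tau}\,\delta_D$, i.e.\ interchanging the conjugation with the Weyl inversion integral and fixing the correct Plancherel normalisation. Once that identity is in place, the two implications follow from routine continuity, density, and injectivity arguments.
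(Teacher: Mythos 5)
The paper does not actually prove this theorem: it is quoted verbatim from \cite{Kiukas2012} (Prop.~4 there), so there is no in-text proof to compare against. Your strategy is, however, essentially the one used in that reference: reduce informational completeness to the triviality of the set of trace-class operators $D$ annihilated by all $\G^T(Z)$, observe that the density of $Z\mapsto\tr{D\,\G^T(Z)}$ is (a symplectic Fourier transform of) the product of the Weyl transforms of $T$ and $D$, and then use continuity of Weyl transforms of trace-class operators together with injectivity of the Fourier transform on $L^1$. The forward implication (dense non-vanishing set of $\tau$ forces $\delta_D\equiv 0$, hence $D=0$) is complete as you present it, modulo the Plancherel-type justification of the identity $f_D\propto\widehat{\overline{\tau}\,\delta_D}$, which you correctly flag as the technical crux and which is standard quantum harmonic analysis in the sense of \cite{Werner1984}.

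There is one genuine gap, in the converse direction. From a nonzero $g\in L^2(\R^2)$ supported in $U\cup(-U)$, Plancherel surjectivity only gives you a \emph{Hilbert--Schmidt} operator $D$, and a Hilbert--Schmidt operator need not be trace class; without trace class you can neither speak of $\tr{D}=0$ nor realize $D$ as a difference of two states, so the final step collapses as written. The repair is easy but must be made explicit: choose $g$ to be a nonzero smooth compactly supported function inside the open set $U\cup(-U)$ with the Hermitian symmetry $g(a,b)=\overline{g(-a,-b)}$; the inverse Weyl transform of a Schwartz function is a self-adjoint trace-class operator, and since $(0,0)\notin U\cup(-U)$ (because $\tau(0,0)=\tr{T}=1$) one gets $\tr{D}=g(0,0)=0$. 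Also, your aside about ``perturbing a faithful state'' is not the right mechanism (positivity of $\rho+\epsilon D$ can fail for every $\epsilon>0$ in infinite dimensions); the clean route is the Jordan decomposition $D=D_+-D_-$ with $\tr{D_+}=\tr{D_-}=c>0$, yielding the distinct states $\rho=D_+/c$ and $\sigma=D_-/c$ with identical $\G^T$-statistics. With these two repairs the argument is correct.
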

\noindent
An important class of informationally complete phase space observables are those whose generating operator satisfies $\tr{TW(q,p)}\neq 0 $ for all $(q,p)\in\R^2$. Such operators are called regular \cite{Werner1984}. Clearly any $T$ determined by a Gaussian wavefunction  is regular.


\section{Main results}\label{QP-tomography}
The phase space observables $\G^T$ are archetypes of approximate joint observables for position and momentum. It is even possible to choose  $\G^T$  in such a way that that the Cartesian margins $\mu^T*\sfq$ and $\nu^T*\sfp$ are informationally equivalent with $\sfq$ and $\sfp $. Indeed, consider the simplest case $T=\vert h_0\rangle\langle h_0\vert$ so that for any state the corresponding  probability density is the Husimi $Q$-function  \cite{Husimi1940} of the state. In this case the convolving measures are given by $\mu^T(X) =\nu^T(X) =\frac{1}{\sqrt{\pi}}\int_X e^{-x^2}\, dx$ so that $\widehat{\mu}^T(p)=\widehat{\nu}^T(p)=\frac{1}{\sqrt{2\pi}} e^{-\frac{p^2}{4}}$ which confirms informational equivalence. However, this particular  observable is also informationally complete, as can be seen from $\langle h_0 \vert W(q,p) h_0\rangle = e^{-\frac{1}{4}(q^2+p^2)}$. The following Proposition shows that this is not in general the case.


\begin{proposition}\label{counterexample1}
There exist informationally incomplete phase space observables whose margins are informationally equivalent with position and momentum.
\end{proposition}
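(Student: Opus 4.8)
The plan is to exhibit an explicit generating operator $T$ that simultaneously satisfies two conditions: on one hand, the support of its Weyl transform $(q,p)\mapsto \tr{TW(q,p)}$ must fail to be all of $\R^2$ (so that $\G^T$ is informationally \emph{incomplete} by Theorem \ref{infotheorem}), while on the other hand the two \emph{one-dimensional} sections of this transform, namely $p\mapsto \tr{TW(0,p)}$ and $q\mapsto \tr{TW(-q,0)}$, must each have full support $\R$ (so that, via \eqref{FourierQ} and \eqref{FourierP}, both Cartesian margins are informationally equivalent with $\sfq$ and $\sfp$). The key observation driving the construction is that these requirements are not in conflict: a function on $\R^2$ can vanish on a large set while remaining nonzero along the two coordinate axes. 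So the whole problem reduces to engineering a state $T$ whose Weyl transform has exactly this profile.

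First I would try to build $T$ as a convex combination (or a suitable diagonal operator in the Hermite basis) for which the Weyl transform can be computed in closed form, taking advantage of the well-known overlaps $\langle h_m\vert W(q,p) h_n\rangle$, which are expressible via Laguerre polynomials in $\frac{1}{2}(q^2+p^2)$ times a Gaussian factor and a phase. The cleanest route is to look for a $T$ whose Weyl transform factors or depends on $(q,p)$ in a way that makes the zero set visible. For instance, a mixture of the form $T=\lambda\,\vert h_0\rangle\langle h_0\vert + (1-\lambda)\,\vert h_n\rangle\langle h_n\vert$ gives a Weyl transform of the shape $e^{-\frac14(q^2+p^2)}\bigl[\lambda + (1-\lambda)L_n\bigl(\tfrac12(q^2+p^2)\bigr)\bigr]$, which is radially symmetric; its zeros form circles, i.e. a set whose complement is not dense, so the support is a proper closed subset of $\R^2$ and informational incompleteness follows. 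At the same time, restricting to the axes $p=0$ or $q=0$ one must simply check that the resulting one-variable function does not vanish identically on any ray to the origin — equivalently that the chosen $\lambda$ and $n$ avoid a coincidental zero along the whole axis.

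The main obstacle, and the step I would spend the most care on, is arranging the two conditions to hold \emph{simultaneously} for a single choice of parameters: I must pick $\lambda$ and $n$ so that the radial function $\lambda + (1-\lambda)L_n(r)$ genuinely vanishes for some $r>0$ (guaranteeing a nonempty zero circle and hence incompleteness) yet the functions $p\mapsto e^{-p^2/4}\bigl[\lambda+(1-\lambda)L_n(p^2/2)\bigr]$ and its $q$-analogue have support all of $\R$. The latter is where the real check lies: having isolated zeros on the axis is harmless — informational equivalence requires only that the support, i.e. the closure of the set where the function is nonzero, equal $\R$, and isolated zeros of a (non-identically-vanishing real-analytic) function cannot destroy that. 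So the genuine work is to verify that the axial section is not identically zero and that its isolated zeros do not accumulate to fill out a half-line, which for the explicit Laguerre expression is a finite computation.

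I would close by recording the chosen $T$ explicitly, invoking Theorem \ref{infotheorem} to conclude incompleteness from the existence of the zero circle, and invoking the support criterion following \eqref{FourierQ}--\eqref{FourierP} to conclude informational equivalence of both margins. The cleanest presentation keeps $T$ as simple as possible — ideally the smallest $n$ for which $L_n$ has a positive real zero already does the job, after a suitable choice of $\lambda$, so that the verification of both the circle of zeros and the full axial support reduces to elementary properties of low-degree Laguerre polynomials.
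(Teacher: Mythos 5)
Your overall strategy --- find a $T$ whose Weyl transform vanishes on a large set but has full support along the two coordinate axes --- is exactly the right one, and your remarks about the axial sections (isolated zeros being harmless for informational equivalence) are correct. But the concrete construction fails at the decisive step. For $T=\lambda\,\vert h_0\rangle\langle h_0\vert + (1-\lambda)\,\vert h_n\rangle\langle h_n\vert$ the Weyl transform is $e^{-\frac14(q^2+p^2)}\bigl[\lambda+(1-\lambda)L_n\bigl(\tfrac12(q^2+p^2)\bigr)\bigr]$, whose zero set is a \emph{finite union of circles}. The complement of finitely many circles is open and dense in $\R^2$, so the support (the closure of the set where the function is nonzero) is still all of $\R^2$, and Theorem \ref{infotheorem} then gives informational \emph{completeness} --- the opposite of what you need. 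Your claim that the zeros form ``a set whose complement is not dense'' is the error. In fact no operator of the kind you consider can work: any $T$ with a finite matrix in the Hermite basis has a Weyl transform equal to a polynomial times a Gaussian, hence real-analytic on $\R^2$; such a function cannot vanish on a nonempty open set without vanishing identically, and it does not vanish identically since its value at the origin is $\tr{T}=1$. So its support is automatically $\R^2$ and $\G^T$ is automatically informationally complete.

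To defeat Theorem \ref{infotheorem} you must force the Weyl transform to vanish on an open set, which requires leaving the analytic category. The paper does this with $T=\tfrac12\vert\vp\rangle\langle\vp\vert+\tfrac12\vert\widehat{\vp}\rangle\langle\widehat{\vp}\vert$ where $\vp=\chi_{[-1/2,1/2]}$: the compact support of $\vp$ in position space kills the first term of $\tr{TW(q,p)}$ for $\vert q\vert\geq 1$, the compact support of $\widehat{\vp}$ in momentum space kills the second for $\vert p\vert\geq 1$, so the Weyl transform vanishes identically on the four open corner regions $\vert q\vert,\vert p\vert>1$, while the axial sections are explicitly computable and vanish only at the isolated points $2n\pi$, $n\in\Z\setminus\{0\}$. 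Some mechanism of this kind --- exact vanishing on an open region, typically produced by compactly supported wavefunctions in complementary representations --- is unavoidable, and it is the one idea missing from your proposal.
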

\begin{proof}
Let $\vp=\chi_{\scriptscriptstyle{[-1/2,1/2]}}$,  the characteristic function of  the interval $\left[-\tfrac{1}{2}, \tfrac{1}{2}\right]$, and define 
$$
T=\frac{1}{2}\vert \vp\rangle\langle \vp\vert + \frac{1}{2} \vert\widehat{\vp}\rangle \langle \widehat{\vp}\vert
$$
where $\widehat{\vp}$ denotes the Fourier transform of $\vp$, that is,
$$
\widehat{\vp} (p) = \frac{1}{\sqrt{2\pi}}\int e^{-ipx} \vp(x)\, \textrm{d}x =\frac{1}{\sqrt{2\pi}} \frac{\sin (p/2)}{p/2}
$$
for all $p\in \R\setminus \{0\}$ and $\widehat{\vp}(0) =\frac{1}{\sqrt{2\pi}}$. Now consider the observable  $\G^T$ and, in particular, the margins $\mu^T*\sfq$ and $\nu^T*\sfp$. Using the fact that $\vp$ and $\widehat{\vp}$ are even functions we obtain the convolving measures
\begin{align}
\mu^T(X) &=\nu^T (X) = \textrm{tr}[T\sfq(X)] \nonumber\\
&= \int_X \frac{1}{2} \left( \vert\vp (x)\vert^2 + \vert \widehat{\vp}(x)\vert^2  \right)\, \textrm{d}x\nonumber
\end{align}
for all $X\in\br$. Equations \eqref{FourierQ} and \eqref{FourierP} now give the Fourier transforms as (see Figure \ref{Fourierfigure}) 
\begin{align*}
\widehat{\mu}^T(p) =&\, \widehat{\nu}^T (p) =  \frac{1}{2\sqrt{2\pi}} \int e^{ipx} \vert \vp(x)\vert^2\, \textrm{d}x \\
&+  \frac{1}{2\sqrt{2\pi}} \int \overline{\vp(x)}\vp(x+p)\, \textrm{d}x \nonumber\\
=& \left\{ \begin{array}{ll}
\tfrac{1}{2\sqrt{2\pi}}\left(1-\vert p\vert + \frac{\sin (p/2)}{p/2}\right), & \textrm{ when }\vert p\vert \leq 1 \\
\tfrac{1}{2\sqrt{2\pi}}  \frac{\sin(p/2)}{p/2}, & \textrm{ otherwise}.
\end{array}\right.\nonumber
\end{align*}
This shows that $\widehat{\mu}^T(p) =0$ if and only if $p=2n\pi$, $n\in\Z\setminus\{ 0\}$, that is, $\textrm{supp}\, \widehat{\mu}^T = \textrm{supp}\, \widehat{\nu}^T =\R$. 


\begin{figure}[h]
\includegraphics[width=0.8\columnwidth]{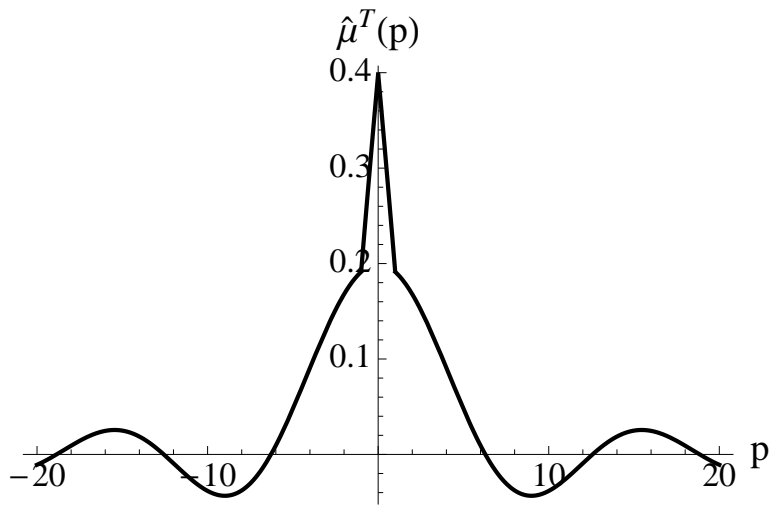}
\caption{The plot of $\widehat{\mu}^T$ as a function of $p$.}\label{Fourierfigure}
 \end{figure}


Now consider the informational completeness of $\G^T$. Using the fact that $F^{-1}W(q,p)F =W(-p,q)$ for all $(q,p)\in\R^2$, where $F$ is the Fourier-Plancherel operator, we get
\begin{align}
\textrm{tr}[TW(q,p)] &= \frac{1}{2} e^{-i\frac{qp}{2}}\int e^{ipx} \overline{\vp(x)} \vp(x-q)\, \textrm{d}x \nonumber\\
 &+ \frac{1}{2} e^{i\frac{qp}{2}}\int e^{iqx} \overline{\vp(x)} \vp(x+p)\, \textrm{d}x.\nonumber
\end{align}
The first integral is zero when $\vert q\vert \geq 1$ and the second one vanishes for $\vert p\vert \geq 1$. Thus $\textrm{tr}[TW(q,p)]=0$ whenever $\vert q\vert, \vert p\vert\geq 1$ which shows that the support of $(q,p)\mapsto \textrm{tr}[TW(q,p)]$ is not the whole $\R^2$. According to Theorem \ref{infotheorem} this means that $\G^T$ is not informationally complete.
\end{proof}


Note that  since the generating operator $T$  is typically linked to the state of the probe system used in the measurement of $\G^T$, the generating operator of Proposition \ref{counterexample1} is also physically meaningful. Indeed, in the case that $T$ is (mathematically) the state of the probe, this can be realized by  randomly  preparing the probe system in each experimental run into the state $\vert \varphi \rangle\langle \varphi\vert$ or  $\vert \widehat{\varphi} \rangle\langle \widehat{\varphi}\vert$. That is, in each run the probe state is localized either in position or momentum space with probability $1/2$. 


We close this section by showing that the informational completeness of a phase space observable does not imply that the margins are informationally equivalent with position and momentum. In this case it seems difficult to give a physically relevant counterexample, but mathematically it can be constructed.
 

\begin{proposition}\label{counterexample2}
There exist informationally complete phase space observables whose margins are not informationally equivalent with position and momentum.
\end{proposition}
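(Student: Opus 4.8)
The plan is to apply Theorem~\ref{infotheorem}: I must produce a generating operator $T$ whose Weyl transform $(q,p)\mapsto\tr{TW(q,p)}$ has support all of $\R^2$, while simultaneously destroying the informational equivalence of one margin. Since $\widehat{\mu}^T(p)=\tfrac{1}{\sqrt{2\pi}}\tr{TW(0,p)}$ is, up to normalization, the Fourier transform of the position distribution $x\mapsto\langle x\vert T\vert x\rangle$ of $T$, the position margin $\mu^T*\sfq$ fails to be informationally equivalent with $\sfq$ exactly when this position distribution is band--limited, i.e. $\widehat{\mu}^T$ has compact support. The tension I must resolve is that $\widehat{\mu}^T$ is nothing but the restriction of the full Weyl transform to the line $q=0$: I need this restriction to vanish on $\{\,\vert p\vert\ge1\,\}$ while the two--dimensional transform still vanishes only on a set with empty interior.

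I would take the pure state $T=\vert\psi\rangle\langle\psi\vert$ with $\psi=\vert\widehat{\vp}\vert$, the modulus of the sinc function $\widehat{\vp}$ of Proposition~\ref{counterexample1} (here $\vp=\chi_{\scriptscriptstyle{[-1/2,1/2]}}$); note $\|\psi\|=\|\widehat{\vp}\|=\|\vp\|=1$. Then $\vert\psi(x)\vert^2=\vert\widehat{\vp}(x)\vert^2$ is precisely the density whose Fourier transform was computed in Proposition~\ref{counterexample1}, and that computation gives at once $\widehat{\mu}^T(p)=\tfrac{1}{\sqrt{2\pi}}(1-\vert p\vert)$ for $\vert p\vert\le1$ and $\widehat{\mu}^T(p)=0$ for $\vert p\vert\ge1$. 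Hence $\textrm{supp}\,\widehat{\mu}^T=[-1,1]\neq\R$, so the position margin is already not informationally equivalent with $\sfq$, which is all the proposition demands. The momentum margin, by contrast, remains equivalent, since $\widehat{\nu}^T(q)=\tfrac{1}{\sqrt{2\pi}}\int\psi(x)\psi(x+q)\,\textrm{d}x>0$ for every $q$.

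It remains to verify informational completeness, and this is where I expect the real work to lie. Replacing $\widehat{\vp}$ by its modulus keeps the position density band--limited but renders $\psi$ itself non--band--limited, because $\vert\widehat{\vp}\vert$ has corners at the zeros $\{\,2\pi n : n\in\Z\,\}$ of the sine. For $q\neq0$ the Weyl transform equals $e^{-iqp/2}$ times the Fourier transform of the nonnegative $L^1$ function $x\mapsto\psi(x)\psi(x-q)$, which again has corners and hence is not band--limited; the aim is to conclude that this Fourier transform cannot vanish on any open $p$--interval, so that the zero set of $(q,p)\mapsto\tr{TW(q,p)}$ has empty interior and its support is all of $\R^2$. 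The delicate point---the main obstacle---is to exclude an entire open rectangle of zeros uniformly in $q$: I would handle it by using that $\psi(x)\psi(x-q)$ is strictly positive off a discrete set and that its corners force the transform to decay like a nonzero multiple of $p^{-2}$, so that for each $q$ it is nonzero for all large $\vert p\vert$ and can have at most isolated zeros. Theorem~\ref{infotheorem} then certifies that $\G^T$ is informationally complete.
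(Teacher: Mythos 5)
Your construction is genuinely different from the paper's: the paper smears a regular generating operator $T_0$ into $f*T_0$, choosing $f$ so that $\widehat f$ is supported on a dense union of strips through the origin, which lets both the full-support condition of Theorem \ref{infotheorem} and the compact support of the marginal Fourier transforms be read off by inspection. The first half of your argument is correct and clean: for $T=\vert\psi\rangle\langle\psi\vert$ with $\psi=\vert\widehat{\vp}\vert$ one indeed gets $\widehat{\mu}^T(p)=\tfrac{1}{\sqrt{2\pi}}(1-\vert p\vert)$ for $\vert p\vert\le1$ and $0$ otherwise, so the position margin is not informationally equivalent with $\sfq$, which already suffices for the proposition once completeness is in hand.

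The gap is precisely where you locate it, and the sketch you offer does not close it. Writing $\tr{TW(q,p)}$ as a phase times the Fourier transform of $g_q(x)=\psi(x)\psi(x-q)$, you need the zero set of $(q,p)\mapsto\widehat{g_q}(p)$ to have empty interior. Three problems. First, ``non--band-limited'' is strictly weaker than ``support equals $\R$'': even if the corner/decay argument showed $\widehat{g_q}(p)\neq0$ for all large $\vert p\vert$, nothing excludes an interval of zeros at moderate $\vert p\vert$, and no analyticity is available to propagate nonvanishing, since $g_q$ decays only like $x^{-2}$, so $\widehat{g_q}$ is merely continuous (nonnegativity of $g_q$ only makes $\widehat{g_q}$ positive definite, which is compatible with compactly supported transforms). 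Your claim that $\widehat{g_q}$ ``can have at most isolated zeros'' therefore has no justification. Second, the putative $p^{-2}$ leading term is an infinite oscillatory sum over the corners at $2\pi\Z\cup(q+2\pi\Z)$, and its nonvanishing as a function of $p$ is itself unproved; indeed at $q=0$ the mechanism fails outright, since $g_0=\widehat{\vp}^{\,2}$ is smooth and band-limited and $\widehat{g_0}$ vanishes identically for $\vert p\vert\ge1$, so the blanket statement ``for each $q$ it is nonzero for all large $\vert p\vert$'' is false as written. Third, without these steps the hypothesis of Theorem \ref{infotheorem} is simply not verified. Either complete this (delicate) harmonic-analysis argument, or adopt the paper's strategy of engineering $\widehat f$ directly, where full control over the zero set is built in.
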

\begin{proof}
To begin with, we pick any regular generating operator $T_0$, i.e., one that  satisfies $\tr{T_0W(q,p)} \neq 0$ for all $(q,p)\in\R^2$. Then for any positive $f\in L^1 (\R^2)$ such that $\int f(q,p)\, \textrm{d} q\textrm{d} p =1$, the trace class operator
$$
f*T_0 = \int f(q,p) W(q,p) T_0 W(q,p)^*\, \textrm{d} q\textrm{d} p
$$
is positive and of unit trace \cite{Werner1984}. Furthermore, we have $\tr{f*T_0 W(q,p)}= 2\pi\, \widehat{f}(-p,q)\tr{T_0 W(q,p)}$ so that in view of Equations \eqref{FourierQ} and \eqref{FourierP}, and Theorem \ref{infotheorem} we need to find the function $f$ in such a way that $\textrm{supp} \, \widehat{f} = \R^2$ but the supports of the functions $p\mapsto \widehat{f} (-p,0)$ and  $q\mapsto \widehat{f} (0,-q)$ are proper subsets of  $\R$.


To begin the construction, we define for any $r>0$ the function 
$$
\widehat{g}_{0,r} (q,p) =\frac{1}{2\pi r}e^{-q^2} \left(\chi_{\left[-\frac{r}{2}, \frac{r}{2}\right] } * \chi_{\left[-\frac{r}{2}, \frac{r}{2}\right]}\right) (p)
$$ 
so that 
$$
g_{0,r} (x,y) =\frac{1}{2r\pi^{3/2}} \frac{1-\cos (ry)}{y^2} e^{-\frac{x^2}{4}}.
$$
In particular, $g_{0,r}\in L^1 (\R^2)$, $g_{0,r} \geq 0$, $\int g_{0,r} (x,y) \, \mathrm{d}x\mathrm{d}y=1$ and $\widehat{g}_{0,r}(q,p)=0 $ if and only if $\vert p\vert \geq r$. Then for any $\theta \in[0,\frac{\pi}{2}]$ define 
$$
g_{\theta,r} (x,y) = g_{0,r}(x\cos\theta + y\sin\theta, -x\sin\theta + y\cos\theta) ,
$$
so that $\widehat{g}_{\theta,r}$ is nonzero only on the strip which is in an angle of $\theta$ with respect to the $q$-axis (see Figure \ref{zeros}). In particular, $\widehat{g}_{\theta,r} (q,0) = 0 $ when  $\vert q\vert \geq \frac{r}{\sin\theta}$ and $\widehat{g}_{\theta,r} (0,p)  = 0 $ when $\vert p\vert \geq \frac{r}{\cos\theta}$.


\begin{figure}[h]
\begin{center}
\includegraphics[width=0.7\columnwidth]{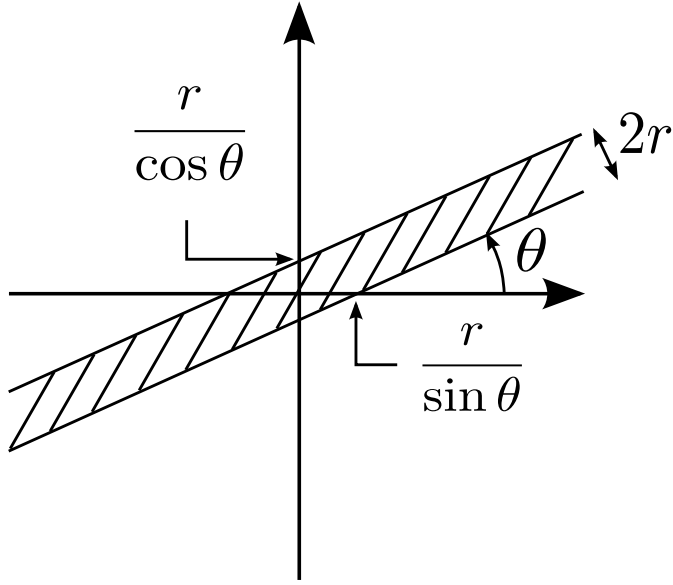}
\caption{The shaded region represents the strip in phase space where the function $\widehat{g}_{\theta,r}$ is nonzero.}\label{zeros}
\end{center}
 \end{figure}


 We then define $\theta_n = \frac{\pi}{2^{n+1}}  $ and $r_n = \frac{\sin \theta_n}{2^n-1}$ and the function $g$ via the series 
$$
g(x,y) = \sum_{n=1}^\infty \sum_{k=1}^{2^n-1} \frac{1}{2^{n+k}} g_{ k\theta_n ,r_n}(q,p)
$$
so that $g\in L^1 (\R^2)$ and $\widehat{g}(x,y)=0$ if and only if $\widehat{g}_{k\theta_n,r_n}(x,y)=0$ for all $n\in\N$ and  $k=1,\ldots, 2^{n}-1$. Finally, we define  $f(x,y) = C \left(g(x,y) + g(-x,y) \right)$, where $C=\frac{1}{2}\left( \int g(x,y)\,  \mathrm{d}x \mathrm{d}y \right)^{-1} $ so that $f\in L^1 (\R^2)$, $f(x,y)\geq 0$ for all $(x,y)\in\R^2$, and $\int f(x,y) \, \mathrm{d}x\, \mathrm{d}y = 1$.


In order to see that this function has the desired properties, let $B$ be a ball with radius $\epsilon >0$ centered at $(q_0,p_0)\in\R^2$. Due to the symmetry properties of $f$ we may without loss of generality assume that $q_0,p_0\geq 0$. Since $\{  k\theta_n \mid   n\in\N, k=1,\ldots, 2^{n}-1 \}$ is dense in $[0,\frac{\pi}{2}]$, there exists an $n_0\in\N$ and a $ k_0\in \{ 1, \ldots, 2^{n_0}-1\}$ such that the line $p = q\tan(k_0\theta_{n_0})$  intersects with $B$. In particular, $(r_0\cos(k_0\theta_{n_0}), r_0 \sin(k_0\theta_{n_0}))\in B$ for some $r_0>0$ and since $\widehat{f} (r_0\cos(k_0\theta_{n_0}), r_0 \sin(k_0\theta_{n_0}))\neq 0$, we conclude that $\textrm{supp}\, \widehat{f} = \R^2$. However, $\widehat{f}(q,0) = 0 $ when 
$$
\vert q\vert \geq \sup_{n,k} \tfrac{r_n}{\sin{k\theta_n}} = \sup_{n,k} \left( \tfrac{1}{2^n-1} \tfrac{\sin\theta_n}{\sin (k\theta_n)} \right)= \frac{1}{2}
$$ 
and $\widehat{f}(0,p) =0$ when 
$$
\vert p\vert \geq \sup_{n,k} \tfrac{r_n}{\cos{k\theta_n}} = \sup_{n,k} \left( \tfrac{1}{2^n-1} \tfrac{\sin\theta_n}{\cos (k\theta_n)} \right)= \frac{1}{2}, 
$$
so that the support of neither of these ``marginal'' functions is $\R$. Hence,  the observable $\G^{f*T_0}$ is informationally complete but the marginal observables are not informationally equivalent with position and momentum.
\end{proof}


\begin{remark}
\rm
Obviously the counterexample for Proposition \ref{counterexample1} could also have been given via a similar construction. Indeed, if $T_0$ is as before and  we define $f_0 =C_0\left( g_{0,1} + g_{\pi/2,1}\right)$ where $C_0>0$ is the normalization constant, we find that $\G^{f_0*T_0}$ is informationally incomplete but the margins satisfy the appropriate conditions for informational equivalence. However, we found it useful to present the given counterexample since it is physically more relevant.
\end{remark}


\section{Reconstructing position and momentum}\label{Reconstruction}
We will next demonstrate two methods for determining the position and momentum distributions from the marginal statistics of a phase space observable. The first one uses the theory of Fourier transforms and second one uses the statistical method of moments. For the first part only the minimal requirement of informational equivalence for the margins is required as for the second part we need a stronger condition, namely, that of exponential boundedness.  


\subsection{Fourier theory}
Let $\G^T$ be such that $\mu^T*\sfq$ and $\nu^T*\sfp$ are informationally equivalent with the corresponding sharp observables $\sfq$ and $\sfp$, that is, $\textrm{supp}\, \widehat{\mu}^T=\textrm{supp}\, \widehat{\nu}^T =\R$. Let $\rho$ be an arbitrary state and consider the first marginal observable. By taking a Fourier transform of the probability measure we obtain $(\widehat{\mu^T*\p^\sfq_\rho})(p) =\sqrt{2\pi}\widehat{\mu}^T(p) \widehat{\p}^\sfq_\rho(p)$ for all $p\in\R$. It follows that for all $p\in\R$ such that $\widehat{\mu}^T(p)\neq 0$ we have 
\begin{equation}\label{FT_of_convolution}
\widehat{\p}^\sfq_\rho (p)=\frac{1}{\sqrt{2\pi}} \frac{(\widehat{\mu^T*\p^\sfq_\rho})(p)}{\widehat{\mu}^T(p)}.
\end{equation} 
Since we know that the Fourier transform $\widehat{\p}^\sfq_\rho $ is a bounded continuous function, we can take appropriate limits on the right-hand side of Eq. \eqref{FT_of_convolution} to determine $\widehat{\p}^\sfq_\rho $ for those $p\in\R$ for which $\widehat{\mu}^T(p)=0$.  Since the Fourier transform is injective, this uniquely determines the measure $\p^\sfq_\rho$.


If we want to invert the Fourier transform to get explicitly the form of the position distribution, we need to assume that the right-hand side of Eq. \eqref{FT_of_convolution} is integrable. To that end, suppose that $\rho$ is a pure state given by a unit vector  $\psi\in\h D$. The position distribution $x\mapsto \vert \psi(x)\vert^2$ then belongs to the Schwartz space $\h S(\R)$, and hence also the Fourier transform $  \widehat{\p}^\sfq_\rho  $ is in $\h S(\R)$. Furthermore, since $\h S(\R)\subset L^1(\R)$ we may use the Fourier inversion theorem to obtain the position distribution
$$
\vert \psi(x)\vert^2 = \frac{1}{2\pi}\int e^{ixp} \frac{(\widehat{\mu^T*\p^\sfq_\psi})(p)}{\widehat{\mu}^T(p)}\, \textrm{d}p
$$
for almost all $x\in\R$. A similar treatment shows that we can reconstruct the momentum distribution from $\nu^T*\p^\sfp_\psi$.  The position and momentum distributions can therefore be reconstructed explicitly at least for the dense set of vector states $\h D$. Furthermore, this  method works also for any finite mixture $\rho =\sum_{n=0}^k c_n \vert \psi_n\rangle\langle \psi_n \vert$ where $(\psi_n)_{n=0}^k\subset \h D$ and $(c_n)_{n=0}^k\subset [0,1]$, $\sum_{n=0}^k c_n=1$. In other words, for any state $\rho$ whose matrix representation with respect to the basis $\{ h_n \vert n=0,1,2,\ldots\}$ is finite. It is known that  such states are dense in the set of all states (this follows, e.g., from \cite[Theorem 1]{Grumm1973}).


\subsection{Method of moments}\label{method_of_moments}
The statistical method of moments  was presented in this context in \cite{Busch2008} and was further illustrated in \cite{Kiukas2009}. The idea is to reconstruct the moments of the position and momentum distributions from the moments of the marginal statistics.  In this case the informational equivalence of the observables is not sufficient to ensure the existence of the moments. Indeed, in  the counterexample of Proposition \ref{counterexample1}, the convolving measures $\mu^T$ and $\nu^T$ do not have any finite moments. Moreover, even the existence of finite moments does not guarantee that the probability measure is uniquely determined by the moments. In this sense the ability to reconstruct the moments of the position and momentum distributions does not necessarily mean that the actual distributions can be reconstructed. To circumvent this problem, we need to make a stronger assumption of exponential boundedness for the measures. 


Recall that a probability measure $\mu :\h B(\R)\rightarrow [0,1]$ is exponentially bounded if there exists an $a>0$ such that 
$$
\int e^{a\vert x \vert} \, \textrm{d}\mu(x) <\infty.
$$
According to \cite[Prop. 2]{Lahti2005} (which is based on the proof of \cite[Prop. 1.5]{Simon1998}) a probability measure is exponentially bounded if and only if there exist positive constants $C,R>0$ such that  
\begin{equation}\label{moment_inequality}
 \big\vert\,  \mu[k]\,  \big\vert \leq C R^k k!
\end{equation}
for all $k\in\N$. In addition to the existence of all moments, exponentially bounded measures have the important property of being determinate. That is, if $\mu$ is exponentially bounded and $\nu:\br\to[0,1]$ is another probability measure such that $\nu[k] = \mu[k]$ for all $k\in\N$, then $\nu = \mu$. It is also worth noting that any exponentially bounded measure satisfies $\textrm{supp}\, \widehat{\mu}=\R$. This is due to the fact that the Fourier transform has an analytic continuation to the strip $\{ z\in\C\mid \vert\textrm{Im}(z)\vert < a\}$, and is therefore zero in at most countably many points. In particular, if $\mu^T$ and $\nu^T$ are exponentially bounded, then the corresponding observables $\mu^T*\sfq$ and $\nu^T*\sfp$ are informationally equivalent with $\sfq$ and $\sfp$. 


Let again $\rho$ be a state and suppose that $\G^T$ is such that the marginal measures $\mu^T*\p^\sfq_\rho $ and $\nu^T*\p^\sfp_\rho$ are exponentially bounded. Note that since $\textrm{supp}\, \widehat{\mu^T*\p^{\sfq}_\rho} =\textrm{supp}\, \left(\sqrt{2\pi}\, \widehat{\mu}^T \,\widehat{\p}^{\sfq}_\rho \right)=\R$ and similarly for the second margin, we have $\textrm{supp}\, \widehat{\mu}^T =\textrm{supp}\, \widehat{\nu}^T = \R$ so that informational equivalence is guaranteed.  In particular, if the moments of the marginal distributions satisfy  \eqref{moment_inequality}, then we know that the corresponding observables are informationally equivalent with position and momentum.   Thus, the moment inequality \eqref{moment_inequality} may be viewed as an operational sufficient condition for informational equivalence.


Consider again the first marginal probability measure $\mu^T*\p^\sfq_\rho$. We can now calculate the $k$th moment as
$$
\left(\mu^T*\p^\sfq_\rho \right) [k] =\sum_{n=0}^k \binom{k}{n}\mu^T[k-n] \p^\sfq_\rho[n].
$$
From this expression the moments of the position distribution can be solved recursively giving
$$
\p^\sfq_\rho [k] = \left(\mu^T*\p^\sfq_\rho \right)[k] - \sum_{n=0}^{k-1} \binom{k}{n} \mu^T[k-n] \p^\sfq_\rho [n].
$$
In other words, we are able to express the moments of the position distribution in terms of the operationally meaningful moments of the marginal statistics. If we know {\em a priori} that  $\p^\sfq_\rho$ is also exponentially bounded so that it is determined by its moments, then we have determined uniquely the position distribution. This is the case, for instance, when $\rho$  is a finite mixture of vector states from $\h D$, which is a dense set. A similar treatment can of course be carried out for the second margin. 


It is worth noting that if the measures $\mu^T$ and $\nu^T$ are exponentially bounded, then $\mu^T*\p^\sfq_\rho$ and $\nu^T*\p^\sfp_\rho$ are exponentially bounded whenever $\p^\sfq_\rho$ and $\p^\sfp_\rho$ are such. Therefore, the exponential boundedness of  $\mu^T$ and $\nu^T$ guarantees that this   method can be used for a dense set of states. 


\section{Conclusions}\label{conclusions}
We have shown that there is no direct connection between the informational completeness of a  phase space observable and the state distinction properties of its Cartesian margins. More precisely, we have shown that it is not possible to infer informational completeness from the condition that the margins are informationally equivalent with sharp position and momentum observables. This shows that it is possible to determine the position and momentum distributions from the statistics of a single measurement even though the state is not uniquely determined. We have also demonstrated the converse fact that informational completeness does not guarantee the possibility of reconstructing  position and momentum from the mere marginal statistics, but one may occasionally need take the detour via full state reconstruction.

\ 

\noindent
\textbf{Acknowledgments.} The author is grateful to Pekka Lahti for useful discussions and comments on the manuscript. The author was supported by the Finnish Cultural Foundation and the Academy of Finland Grant No. 138135.


\end{document}